\newtheorem{thm}{Theorem}[section]
\newtheorem{prop}[thm]{Proposition}
\newtheorem{lem}[thm]{Lemma}
\newtheorem{assum}{Assumption}
\theoremstyle{definition}
\theoremstyle{remark}
\newtheorem{rem}[thm]{Remark}
\def\Let@{\def\\{\notag\math@cr}}
\numberwithin{equation}{section}
\title{Portfolio Selection with Mandatory Bequest}
\newcommand*{\bigchi}{\mbox{\large$\chi$}}
\author{Jiacheng Feng}
\date{September 1, 2014}
\begin{document}
\maketitle
\begin{center} \footnotesize 
Mentor: Menglu Wang \\
\vspace{4 mm}
UROP+ Final Paper, Summer 2014
\end{center}

\vspace{10 mm}

\begin{abstract}
In this paper, optimal consumption and investment decisions are studied for an investor who can invest in a fixed interest rate bank account and a stock whose price is a log normal diffusion. We present the method of the HJB equation in order to explicitly solve problems of this type with modifications such as a fixed percentage transaction cost and a mandatory bequest function. It is shown that the investor treats the mandatory bequest as an expense that she factors into her personal wealth when making consumption and transaction decisions. Furthermore, the investor keeps her portfolio proportions inside a fixed boundary relating to Merton's optimal proportion and the transaction costs. 

\end{abstract}

\clearpage
\section{Introduction}
A number of investigators have studied the optimal consumption-portfolio policy in continuous time by the method of Stochastic Dynamic programming.  Pioneered by Merton [1], the primary method in solving these types of consumption-portfolio problems is to show the existence of the solution using stochastic control. Most of these portfolio selection works focus on the intrinsic workings of the model and the utility rate function derived from consumption, but dismiss the bequest function by assuming that it is $0$. However, the bequest function is often very relevant in that it demonstrates different possible termination conditions of the control problem, allowing for a more dynamic model. 

The objective of this paper is first to introduce the theory behind solving stochastic control problems and then to analyze a very interesting bequest function, similar to an indicator function, for which the investor must hold on to at least a specific amount of assets.  It is found that when this compulsory goal is added, the investor will remove this fixed wealth from her effective total wealth and adjust her consumption and transaction policies accordingly. Qualitatively, with a higher mandatory bequest, optimal consumption decreases and optimal portfolio proportions will shift towards the safe asset. This problem can find applications in outperforming benchmarks in the financial world as well as understanding money management in survival of low income families,  however, these applications are not dealt with in this paper. A presentation on the effects of a fixed percentage transaction cost is also exhibited, in which we find that the investor will not trade in securities if and only if  her portfolio proportions remain inside a certain region about the zero transaction cost optimal proportions.

The rest of the paper is organized as follows. Section 2 contains general theory behind the optimal portfolio control problem. A heuristic proof of the main theorem is provided.  Section 3 displays explicit underlying assumptions about the market, as well as the behavior of the investor, in order to formalize the portfolio problems. Sections 4 through 7 present different cases of optimal consumption with increasing difficulty by carefully and explicitly deriving the optimality equations for the two-asset problem where the rate of returns are generated by It\^o stochastic processes under isoelastic marginal utility. These problems will build up the necessary conceptual ideas used to solve the problem that requires the investor to satisfy a mandatory bequest under transaction costs. The paper ends with concluding remarks, as well as discussions on work that could be expanded upon in the future. 

{\small \subsection*{Acknowledgments}  It is my pleasure to thank my mentor, Menglu Wang, for her immense help in guiding me through the entire process of research and writing. Of course, all errors in this paper are mine. I would like to also thank Prof. Scott Sheffield for his valuable discussions and suggestions. I thank Prof. Pavel Etingof and the MIT Department of Mathematics for directing the Math UROP+ program, and I gratefully acknowledge the aid from the Paul E. Gray (1954) Endowed Fund for UROP.}

\section{Stochastic Control and the HJB Equation}
Suppose that any instant $t$ the stochastic process  $X_t \in \mathbb{R}^n$ defined on a probability space $(\Omega, \mathcal{F}, P)$ can be influenced by a choice of a parameter $u_t \in U \subset \mathbb{R}^k$, which is called the \emph{control}. Here, assume that $u_t$ only depends on the current state of the system at the time, that is,  $u_t=u(t,X_t)$ is a \emph{Markov control}. Since $u_t$ is determined only by what is happening at time $t$, the function $\omega \rightarrow u(t,X_t(w))$ must be measurable with respect to the filtration $\mathcal{F}_t$, thus the process $u_t$ is $\mathcal{F}_t $-adapted stochastic process. \\

Let the system $X_t$ be described by the well defined stochastic differential equation with initial value:
\begin{align}
dX_t = dX_t^u &= b(X_t, t, u_t)dt + \sigma(X_t, t, u_t)dB_t \; \text{for } t \in (s, T] \\
X_0 &= x_0
\end{align}
where $b: \mathbb{R}^n \times  \mathbb{R} \times U \rightarrow  \mathbb{R}^n$, $\sigma: \mathbb{R}^n \times  \mathbb{R} \times U \rightarrow  \mathbb{R}^{n \times m}$, and $B_t$ the $m$-dimensional Brownian motion.The objective is to set the control $u$ to maximize the \emph{performance function} $J^u (s,x)$,defined as 
\begin{equation} 
J^u(t,x) = \mathbb{E}^{t,x} \left[ \int ^T_t  f^u(s, X_s) ds + g(T, X_T) \right] 
\end{equation}

We can view $f: \mathbb{R}^n \times  \mathbb{R} \times U \rightarrow  \mathbb{R}$ to be the profit rate function, $g: \mathbb{R}^n \times  \mathbb{R} \rightarrow \mathbb{R}$ to be the bequest function, and $T$ to be the first exit time from a solvency set $G$ ($G$ could be the whole space). Assume that $f, g$ are continuous, and $U$ compact. The main question is, for each $(t, X_t)$ can we find an \emph{optimal control} $u^* = u^*(t, X_t)$ and its corresponding \emph{optimal performance function} $\phi(t, X_t)$ such that 
$$\phi(t, X_t) = \underset{u(t,X_t)}{\sup} J^u(t, X_t) = J^{u^*}(t, X_t) ?$$

We introduce the concept of the \textbf{Hamilton-Jacobi-Bellman (HJB) equation} [2], which provides the optimal performance function as solution to the continuous time optimization problems.  First, define the differential operator $\mathcal{L}^v$ to be 
\begin{equation} (\mathcal{L}^v f)  = \frac{\partial f}{\partial t} (t,x) + \sum\limits_{i=1}^n b_i(t,x,v) \frac{\partial f}{\partial x_i} + \sum\limits_{i, j =1}^n  a_{ij} (t,x,v) \frac{\partial^2 f}{\partial x_i \partial x_j} \end{equation}
where $a_{ij} = \frac 12 {\sigma \sigma^T}_{ij}$ and $x = (x_1, ..., x_n)$.

With the same notations as above in the problem statement of optimal control, 
\begin{thm}[HJB equation]
Let 
$$\phi(s,x) = \sup \{ J^u(s,x); u = u(s+t, X_{s+t} ) \}$$
Suppose $\phi \in C^2(G) \cap C( \bar{G} )$ satisfies $$\mathbb{E}[ | \phi ( \alpha, X_{\alpha} ) | + \int_0^\alpha | \mathcal{L}^v \phi(t, X_t) | dt ] < \infty$$
for all bounded stopping times $\alpha \leq T$, all states $(t, X_t) \in G$, and all control $v \in U$. Furthermore, suppose that an optimal Markov control $u^*$ exists, then
\begin{equation} \underset{v \in U}{\sup} \{ f^v(t, X_t) + (\mathcal{ L}^v \phi) (t, X_t) \} = 0 \qquad \text{for all } (t, X_t) \in G \end{equation}
and 
\begin{equation} \phi(t, X_t) = g(t, X_t) \qquad \text{for all } (t, X_t)  \in \partial G \end{equation}
The supremum is obtained when $v = u^*$, the optimal control. \\

Conversely, let $\phi$ be a function in $C^2(G) \cap C( \bar{G} )$ with boundary condition $\underset{t \rightarrow T}{\lim} \phi(t,X_t) = g(T,X_T) \cdot \bigchi _{\{ T < \infty \} }$. Suppose for all control $v \in U$ and all states $(t, X_t) \in G$,
$$f^v(t, X_t) + (L^v \phi) (t, X_t) \leq 0,$$ and $\phi$ is uniformly integrable with respect to the measure.
Then,
$$ \phi(t,X_t) \geq J^u (t, X_t)$$
for all control $u \in U$ and all states $(t, X_t) \in G$. Furthermore, if there exists a control $v$ such that
\begin{equation} f^{v}(t, X_t) + (\mathcal{L}^{v} \phi) (t, X_t) = 0 \end{equation}
then $v = v(t,X_t) = u^*$ is an optimal control that satisfies 
\begin{equation} \phi(t,X_t) = J^{v}(t, X_t)= \underset{u \in U}{\sup} \{J^u(t,X_t) \} \end{equation}

\end{thm}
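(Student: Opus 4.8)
The plan is to derive both halves from It\^o's formula --- in the guise of Dynkin's formula --- applied to $\phi$ along the controlled diffusion, supplemented for the direct implication by the dynamic programming principle. Fix throughout an interior state $(t, X_t) \in G$ and write $T = \tau_G$ for the first exit time from $G$.

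For the direct implication I would first record the \emph{dynamic programming principle}: for every bounded stopping time $\alpha \le T$,
\[
  \phi(t,X_t) \;=\; \sup_{u}\, \mathbb{E}^{t,X_t}\!\left[\int_t^\alpha f^u(s,X_s)\,ds + \phi(\alpha, X_\alpha)\right],
\]
with the supremum attained at $u^*$. Granting this, fix an arbitrary value $v \in U$, let the process run under the constant control $v$ on $[t, t+h]$, and apply It\^o's formula to $\phi(s, X_s)$; taking expectations --- the local-martingale term vanishing after localization, which is where the integrability bound on $\mathcal{L}^v\phi$ enters --- gives
\[
  \mathbb{E}^{t,X_t}\!\left[\phi(t+h, X_{t+h})\right] - \phi(t,X_t) \;=\; \mathbb{E}^{t,X_t}\!\left[\int_t^{t+h} (\mathcal{L}^v\phi)(s,X_s)\,ds\right].
\]
Inserting this into the inequality form of the dynamic programming principle, dividing by $h$, and letting $h \downarrow 0$ (using continuity of $f$, $b$, $\sigma$ and $\phi \in C^2(G)$) yields $f^v(t,X_t) + (\mathcal{L}^v\phi)(t,X_t) \le 0$. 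Repeating the computation with $v = u^*$, for which the dynamic programming principle holds with equality, produces $f^{u^*} + (\mathcal{L}^{u^*}\phi) = 0$, so the supremum equals $0$ and is attained at $u^*$. Finally, on $\partial G$ the exit is immediate, $T = t$ almost surely, so $J^u(t,X_t) = g(t,X_t)$ for every $u$ and hence $\phi = g$ there.

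For the converse I would apply Dynkin's formula to $\phi(s, X_s^u)$ between $t$ and $T$ under an arbitrary control $u$ --- first on $[t, T\wedge\alpha_k]$ for a sequence $\alpha_k \uparrow T$ of bounded stopping times, then passing to the limit using uniform integrability of $\{\phi(\alpha, X_\alpha)\}$ together with the boundary condition $\lim_{s\to T}\phi(s,X_s) = g(T,X_T)\cdot\bigchi_{\{T<\infty\}}$ --- to obtain
\[
  \mathbb{E}^{t,X_t}\!\left[g(T,X_T)\right] \;=\; \phi(t,X_t) + \mathbb{E}^{t,X_t}\!\left[\int_t^{T} (\mathcal{L}^u\phi)(s,X_s)\,ds\right].
\]
The standing hypothesis $(\mathcal{L}^u\phi)(s,X_s) \le -f^u(s,X_s)$ along the path then gives $\phi(t,X_t) \ge \mathbb{E}^{t,X_t}\big[\int_t^T f^u(s,X_s)\,ds + g(T,X_T)\big] = J^u(t,X_t)$, the claimed bound. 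If in addition $v$ satisfies $f^v + (\mathcal{L}^v\phi) = 0$ identically, then every inequality in this chain is an equality along the $v$-controlled path, so $\phi(t,X_t) = J^v(t,X_t)$; combined with $\phi \ge J^u$ for all $u$ this forces $\phi(t,X_t) = \sup_u J^u(t,X_t)$, i.e.\ $v = u^*$ is optimal.

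The main obstacle is analytic bookkeeping rather than the formal manipulation: establishing the dynamic programming principle in the first half (this hides a measurable-selection argument), and, in both halves, ensuring the stochastic integral has vanishing expectation, coping with a possibly infinite exit time $T$, and justifying the interchange of $\lim_{\alpha\uparrow T}$ with the expectation. These are precisely the roles played by the integrability assumption on $\mathcal{L}^v\phi$ and the uniform-integrability assumption on $\phi$. Since the present aim is only a heuristic derivation, I would carry out the steps at the level of localizing stopping times and invoke the standard hypotheses rather than verifying them for a concrete model.
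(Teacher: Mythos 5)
Your proposal is correct and, for the direct implication, rests on the same underlying idea as the paper's proof: the dynamic programming principle on a short interval followed by an infinitesimal expansion of $\phi$. The difference is in execution and scope. The paper's argument is explicitly only a heuristic sketch of the forward direction: it writes the Bellman inequality $\phi(t,x) \geq f^v(t,x)\delta + \phi(t+\delta, R(x,v))$, performs an ordinary (deterministic) Taylor expansion of $\phi(t+\delta,\cdot)$, divides by $\delta$, and lets $\delta \to 0$; this silently replaces $\mathbb{E}[\phi(t+\delta,X_{t+\delta})]-\phi(t,x)$ by a first-order expansion in which the second-order term $a_{ij}\partial^2\phi/\partial x_i\partial x_j$ of $\mathcal{L}^v$ never visibly appears, and it defers all rigor to {\O}ksendal. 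You instead use It\^o's formula in the form of Dynkin's formula, which is the correct stochastic replacement for that Taylor step and is exactly where the hypothesis $\mathbb{E}\big[\int_0^\alpha |\mathcal{L}^v\phi|\,dt\big]<\infty$ is consumed. More substantially, you also prove the converse (verification) half --- Dynkin's formula up to localizing times $\alpha_k \uparrow T$, then a passage to the limit using uniform integrability and the boundary condition --- which the paper's proof does not address at all, even though it is the half actually invoked in Sections 4--7 when a candidate $\phi$ is checked against the PDE. Your account is therefore a strictly more complete version of the intended argument; the only ingredient you correctly flag as nontrivial and do not supply is the dynamic programming principle itself (the measurable-selection issue), which the paper likewise takes for granted.
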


\begin{rem}
The \textbf{Bellman equation} is developed for mathematical optimization in order to solve the problem of maximizing utility subject to a budget constraint in discrete time with intervals $\delta$. The idea behind the equation is to forcefully find an optimal policy with the property that regardless of the initial states and decisions, the remaining decisions be optimal with regards to the current state. With this central idea, the Bellman equation states that the value function $\phi$ in discrete time must satisfy
\begin{equation} \phi(t, x_t) = \underset {u \in U}{\max} \{  \int f^u(t, X_t) dt + \phi(R(X_t,u) \} \end{equation}
where $R(x_t,u)$ denotes the state change at time $t + \delta$ of $x$ after applying the control $u$. The HJB equation is derived by extending the discrete time Bellman equation with the Hamilton-Jacobi equation in physics.
\end{rem}

\begin{proof}
An intuitive sketch of the HJB equation using the ideas of the Bellman equation is derived [4]. For a complete rigorous proof, see {\O}ksendal [2]. 
Let $$\phi(s,x) = \sup \{ J^u(s,x) \}$$ and $$J^u(s,x) = \mathbb{E}^{t,x} \left[ \int ^T_t  f^u(s, X_s) ds + g(T, X_T) \right]$$ where the supremum is taken over all possible controls $u$ starting from $(s,x)$. Suppose that the control $v$ is chosen for the time interval $(t, t+\delta)$ before switching to the optimal control $u$. Then, comparing this with the established optimal control, it must be that
\begin{equation} \phi(t,x(t)) \geq f^v(t,x) \delta + \phi(t+\delta, R(x, v)) \end{equation}
The Taylor expansion on $ \phi (t+\delta, R(x, v))$ is 
\begin{equation} \phi(t+\delta, T(x, v))  = \phi(t,x_t) + \frac{\partial \phi}{\partial t}(t, x_t) \delta + \nabla \phi(t, x_t) \cdot x'_t \delta + o(\delta) \end{equation}
where $\nabla$ is the Laplacian operator with respect to $x$, and $o(\delta)$ is the terms in the Taylor expansion with order greater than one. If we cancel $\phi(t, x_t)$ from both sides and divide by $\delta$ in $(2.10)$, and take $\delta \rightarrow 0$ such that $o(\delta) \rightarrow 0$, then
\begin{equation} 0 \geq f^v(t,x) \delta + \mathcal{L} \phi(t, x_t) \end{equation}
Furthermore, if $v$ is the optimal policy, in that $v = u$, then equality must hold.  
\end{proof}

\section{Assumptions of the Model}
Suppose the investor faces a capital market with the following properties: 

\begin{assum}\label{as:1}
Securities and Market
\end{assum}

We assume that the market in question is perfectly competitive, and trading takes place in continuous time. There are two underlying securities which can be bought and sold at current prices for unlimited amounts. The prices of these two securities $\{P_i(t)\}$ can be identified by It\^o stochastic differential equations: 
$$\frac{dP_i}{dt} = \alpha_i(t,P_i) dt + \sigma(t,P_i) dB_t$$
where $\alpha_i$is the expected value of the percent change in price, $\sigma_i^2$ the variance of this change, and $B_t$ a 1-dimensional Brownian Motion.

The first (bank) security is a safe investment with unit price $\{P_0(t)\}$ pays a constant risk free interest rate $r_0> 0$ for all investments, and charges the same rate on borrowing. The value of the bank securities does not exhibit inflation or deflation.The bank security satisfies 
\begin{equation} dP_0(t) = P_0(t) r_0 dt \end{equation}

The second (stock) security is a risky investment with unit price at time $t$ being $\{P_1(t) \}$, which satisfies the equation
$$\frac{dP_1(t)}{dt} =  P_1(t)[r_1 + s_1 W_t] dt$$
where $W_t$ denotes white noise and $r_1, s_1$ are constants measuring average rate of change and size of the noise. Then,the price function under It\^o stochastic differential equation can be formulated as[2]:
\begin{equation} dP_1(t) = P_1(t) r_1 dt + P_1(t) s_1 dB_t \end{equation}

Let both securities be perfectly divisible. It is natural to assume that the risky security would have a higher expected rate of return than the safe one, thus let $r_1 > r_0 > 0$.

\begin{assum}\label{as:2}
Information
\end{assum}
The probability distribution and the current price of the underlying securities contain all necessary information for any investor to make her decision. This information is publicly and continuously available to all investors free of cost. 

\begin{assum}\label{as:3}
Transaction Costs
\end{assum}
In the third and fourth problem, transaction fees will be incurred for buying or selling stocks. If $v$ denotes the value of the \emph{risky} security that is bought ($v> 0$) or sold ($v < 0$), the cost of transaction per unit is described as a premium cost $\bigchi_0$ that measures whether cash or stock more desirable plus a fee $\bigchi_v$ proportional to the value of the transaction. Concretely, the transaction cost $\tau$ can be written as 
 \begin{displaymath}
   \tau(v) = v(\bigchi_v + \bigchi_0) = \left\{
     \begin{array}{lr}
       |v|(\bigchi+ \bigchi_0) & : v > 0\\
       -|v|(-\bigchi+ \bigchi_0) & : v < 0 \\
       0 &: v = 0
     \end{array}
   \right.
\end{displaymath} 
where $\bigchi$, $\bigchi_0$  are reasonable small constants in $[0,1)$  and $(-1, 1)$ respectively. Then, the per unit transaction costs are $\bigchi+ \bigchi_0$ and $\bigchi - \bigchi_0$ as $v > 0$ and $v < 0$ respectively.

\begin{assum}\label{as:4}
Income and Lifespan
\end{assum}

The investor has a lifespan from $[0,T]$, during which she is expected to earn an influx of income $y(t)$ per unit of time. Assume that $y(t)$ is integrable from any interval in $[0,T]$.
At $t=0$, she begins with an initial fortune $Z_0$, which is all put in the safe asset. The investor acts as if both $T$, $y(t)$, and $Z_0$ are known with certainty at any $t \in [0,T]$. Denote $Z(t) = Z_t$ is the total wealth on the individual. 
Before going into formulating the stochastic differential equation for wealth, we first state and prove the following lemma.
\begin{lem}[Stochastic Integration by Parts] Let $ X_t, Y_t$ be It\^o processes in $\mathbb{R}^n$. Then
$$d(X_tY_t) = X_tdY_t + Y_t dX_t + dX_t \cdot dY_t$$
\end{lem}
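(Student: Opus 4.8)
The plan is to deduce the product rule from It\^o's formula applied to the smooth function $f(x,y) = xy$. First I would reduce to the scalar case: since the claimed identity is bilinear in the components of $X_t$ and $Y_t$, it suffices to establish it for real--valued It\^o processes $X_t = X_0 + \int_0^t u_s\, ds + \int_0^t v_s\, dB_s$ and $Y_t = Y_0 + \int_0^t p_s\, ds + \int_0^t q_s\, dB_s$, and then sum the resulting identities over the $n$ coordinates.

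Next I would regard $(X_t, Y_t)$ as a two--dimensional It\^o process and apply the multidimensional It\^o formula to $f(x,y) = xy \in C^2(\mathbb{R}^2)$. Computing partials gives $f_x = y$, $f_y = x$, $f_{xx} = f_{yy} = 0$, and $f_{xy} = f_{yx} = 1$, so It\^o's formula yields
$$ d(X_tY_t) = Y_t\, dX_t + X_t\, dY_t + \frac12\big(0\cdot (dX_t)^2 + 2\,dX_t\, dY_t + 0\cdot (dY_t)^2\big) = Y_t\, dX_t + X_t\, dY_t + dX_t\cdot dY_t, $$
where the cross term is evaluated with the It\^o multiplication rules $dt\cdot dt = dt\cdot dB_t = 0$ and $dB_t\cdot dB_t = dt$, giving $dX_t\cdot dY_t = v_t q_t\, dt$. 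Recording these multiplication--table conventions here is useful since they are used repeatedly in the later sections.

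If one prefers a self--contained argument rather than invoking It\^o's formula as a black box, I would instead argue from Riemann sums: for a partition $0 = t_0 < t_1 < \cdots < t_N = t$ with mesh tending to $0$, telescoping gives
$$ X_tY_t - X_0Y_0 = \sum_k \big(X_{t_{k+1}} - X_{t_k}\big)Y_{t_k} + \sum_k X_{t_k}\big(Y_{t_{k+1}} - Y_{t_k}\big) + \sum_k \big(X_{t_{k+1}} - X_{t_k}\big)\big(Y_{t_{k+1}} - Y_{t_k}\big). $$
The first two sums converge (in probability) to $\int_0^t Y_s\, dX_s$ and $\int_0^t X_s\, dY_s$ by the definition of the It\^o integral, and the last sum converges to the quadratic covariation $\int_0^t v_s q_s\, ds$.

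The main obstacle is precisely that last step: showing the sum of products of increments converges to the covariation term rather than vanishing. This rests on the $L^2$ estimate for the quadratic variation of Brownian motion (the sums $\sum_k (\Delta B_{t_k})^2$ converge to $t$ in $L^2$, while the purely--$dt$ and mixed contributions are negligible), together with boundedness and continuity assumptions on the integrands needed to control the error terms; these are the same estimates that underlie the proof of It\^o's formula itself. I would therefore state the lemma as an immediate consequence of It\^o's formula applied to $f(x,y)=xy$, citing {\O}ksendal [2] for the analytic details, and use the occasion to fix the notation $dX_t\cdot dY_t$ for the rest of the paper.
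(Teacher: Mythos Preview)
Your proof is correct and takes essentially the same approach as the paper: apply It\^o's formula to the function $g(x,y)=xy$, compute the second-order partials, and read off the product rule. The paper's version is terser---it omits your reduction to the scalar case and the alternative Riemann-sum argument---but the core computation is identical.
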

\begin{proof} Applying It\^o's formula[5] with $g(x,y) = x \cdot y$ it is easy to obtain that 
\begin{align*}
d(X_tY_t) =& d(g(X_t, Y_t)) \\
=& \frac{\partial g}{\partial x}(X_t, Y_t) dX_t + \frac{\partial g}{\partial y}(X_t, Y_t) dY_t+  \frac 12 \frac{\partial^2 g}{\partial x^2}(X_t, Y_t) (dX_t)^2\\
&+  \frac{\partial^2 g}{\partial x \partial y}(X_t, Y_t) dX_t dY_t + \frac 12 \frac{\partial^2 g}{\partial y^2}(X_t, Y_t) (dY_t)^2\\
=& Y_tdX_t + X_tdY_t + dX_t dY_t
\end{align*}
\end{proof}

Let $N_1(t)$ denote the number of shares held in the risky security, and $N_0(t)$ denote the amount of money held in the bank, both at time $t$. We first consider a discrete time frame with fixed interval $h$ for the wealth function:
\begin{equation} Z(t) = N_0(t)P_0(t) + N_1(t)P_1(t) = \sum N_i(t)P_i(t) \end{equation}
where $P_i$ are the prices of the securities as described in Assumption 1. The summation $\sum = \sum_{i=0}^{1}$ is written for simplicity sake. Using Lemma 3.1 on $N_0(t)P_0(t)$ and $ N(t)P_1(t)$, and letting $h \rightarrow 0$, we obtain
\begin{equation} dZ_t =\sum (N_i(t)dP_i(t) + P_i(t)dN_i(t) + dN_i(t)dP_i(t)) \end{equation}
Any changes in the number of shares $N_i$ does not affect the total wealth since every sale of asset is accompanied by the buying of the other of the same worth. Then, if $ D(t)$ is the total adjustment of wealth from $t-h$ to $t$,
$$ D(t)h = \sum[N(t)-N(t-h)]P_i(t)$$
The next interval of wealth adjustment at $t+h$ is similarly
\begin{align*}D&(t+h)h = \sum[ N_i(t+h)-N_i(t)]P_i(t+h)\\
 =& \sum[N_i(t+h) - N_i(t)][P_i(t+h) - P_i(t)] + \sum[N_i(t+h) - N_i(t)]P_i(t)
\end{align*}
Taking the limit of $h \rightarrow 0$, the continuous version of changes in wealth can be formulated.
\begin{equation} D(t)dt =  \sum dN_i(t)P_i(t) + dN_i(t)dP_i(t) \end{equation}
Substituting $D(t)$ into $(3.4)$, the equation for $dZ_t$,
$$dZ_t = N_0(t)dP_0(t) + N_1(t)dP_1(t) + D(t)dt$$
Let $u$ be the proportion of the wealth invested in the risky investment so that
$$u = N_1(t)P_1(t)/Z(t) \qquad (1-u) = N_0(t)P_0(t)/Z(t)$$
$$dZ_t = Z(t)[(1-u) dP_0(t)/P_0(t) + u dP_1(t)/P_1(t)] + D(t)dt $$
Substituting  $dP_i(t)/P_i(t) $ for the stochastic formulation of prices in Assumption 1, the continuous stochastic differential equation for the wealth can be formulated as
\begin{equation} dZ_t = dZ_t^{(u)} = Z_t[r_0(1-u(t))+r_1u(t)]dt + Z_t[s_1u(t)]dB_t +  D(t)dt \end{equation}

\begin{assum}\label{as:5}
Utility Function
\end{assum}

We pick the isoelastic utility function $$U(t, x) = \frac {s^{\gamma}}{\gamma}$$
and let $\gamma$ be a chosen constant in $(0,1)$ so that the utility function is concave and satisfies basic utility function properties. Arrow Pratt's measure of relative risk aversion[6] $R(c) = -U''(c)/U'(c) = 1-\gamma$ is constant. Then, this utility function is a member of the family of utilities characterizing an investor with Constant Relative Risk Aversion (CRRA). One weakness of this model is that decision making is unaffected by scale, since the risks carried by the two securities are independent of how much wealth the investor has.

\section{Simple Problem Without Consumption}

I first present a simple system similar to the one described in {\O}kensdal [2]. Assume that the individual wishes to invest an initial amount of funds in the two-securities market. She does not plan to withdraw or deposit from this fund, thus any new asset purchased must be financed by the sale of the other asset. Her goal is to maximize the utility which is dependent only on how much the portfolio is worth at the future time $t=T$.

Let the fund's total value at time $t$ to be $Z_t$, and $u(t)$ be the Markov control that determines the fraction of wealth placed in the riskier investment at time $t$. In this case, we assume that there are no exchange costs, and since the portfolio is self financing $y(t) = 0$. Our performance function is given by $J= \mathbb{E} [U(Z_T) ] = \mathbb{E}[Z_T^\gamma/\gamma ]$, where $\mathbb{E}$ the expected value function at $t=0$.  

From Assumption 4, the wealth function satisfies the stochastic differential equation
\begin{equation}
dZ_t = dZ_t^{(u)} = Z_t(r_0(1-u(t))+r_1u(t))dt + Z_t(s_1u(t))dB_t
\end{equation}
The HJB equation using Theorem 2.1 for $\phi(s,x) = \sup_u J^u (s,x)$ is
\begin{align}
0 =&  \sup_u \{ (L^u \phi )(s,x) \}
\\ = & \frac{\partial \phi}{\partial t} + \sup_u \{ x(r_0(1-u(t))+r_1u(t))\frac{\partial \phi}{\partial x} + \frac 12 s_1^2 u^2 x^2 \frac{\partial^2 \phi}{\partial x^2} \}
\end{align}
with terminal condition
\begin{equation}
\phi(t,x) =  Z_T^\gamma/\gamma 
\end{equation}
Denote $\phi_x = \frac {\partial \phi}{\partial x}$ and $\phi_{xx} = \frac{\partial^2 \phi}{\partial x^2}$. By taking derivative of expression $(2.3)$ with respect to the control $u$, the supremum could be obtained. 
\begin{align}
0 =& x( - r_0+r_1)\phi_x + s_1^2 x^2 u \phi_{xx}\\
u =& u(t,x) = -\frac{( - r_0+r_1)\phi_x}{s_1^2 x \phi_{xx}}
\end{align}
Since $\phi_x >0$ and $\phi_{xx} < 0$, $u$ is indeed a supremum over all controls. Then, substitute the control $u$ back into the HJB equation,
\begin{align}
\phi_t +  r_0 \phi_x  -\frac{( - r_0+r_1)^2 \phi^2_x}{2s_1^2 \phi_{xx}} = 0 \qquad \text{  for }  t<T \\
\phi(t,x) = Z_T^\gamma/\gamma \qquad \text{  for }  t=T
\end{align}
The solution $\phi$ is of the form $\phi (t,x) = a(t) x^\gamma/\gamma$. By direct substitution into $(2.7)$ and $(2.8)$, and solving the simple differential equation for $a(t)$, it must be that 
\begin{equation}
\phi (t,x) = \exp \left( r_0 \gamma + \frac {(-r_0+r_1)^2 \gamma}{2 s_1^2(1-\gamma)} \right) \cdot \frac{x^{\gamma}}{\gamma}
\end{equation}
Similarly, the optimal control $u^*$ can be derived from $(4.3)$:
\begin{equation}
u^*(t,x) = \frac {-r_0+r_1}{s_1^2(1-\gamma)}
\end{equation}

\bigskip
First, we note that the optimal proportion $u^*$ is in fact constant. The growth of the portfolio $dZ_t/Z_t$ has constant mean and variance:
$$\mu = r_0(1-u^*)+r_1u^* = \frac{(r_1-r_0)^2}{s_1^2(1-\gamma)} + r_0$$
$$\sigma^2 = u^{*2}s_1^2 = \frac{(r_1-r_0)^2}{s_1^2(1-\gamma)^2}$$
Because $r_1 > r_0$ and $0 < \gamma < 1$, we have $u^* \in (0,1)$, implying there will always be investments in both securities.This is a simple example of hedging in the portfolio which demonstrates the importance of diversification. Notably, the portfolio would not require any borrowing ($u(t) \geq 1$) or shortselling ($u(t) \leq 0$) in order to maintain the optimal proportion. Since the model assumption does not allow price to drop to $0$, there would never be a risk of insolvency ($Z_t \leq 0$).

In the extraordinary case of when $r_0 > r_1$, $u$ would then be negative, Intuitively, the investor would optimally shortsell the stock for the bank asset that is safer and has high returns. However, if the shortsold stock asset yields unexpected high returns, it would be possible that the investor goes into bankruptcy.

\section{Consumption With No Transaction Costs}

Consider a more challenging two-asset problem which is similar to that presented in Merton [1]. However, we generalize the problem such that the individual also receives an influx of external income.

Assume that there are no transaction costs. At every instantaneous time, the investor is allowed to consume part of her assets (it does not matter which one since there are no transaction costs). The portfolio grows from both returns on assets and the known continuous external income. Rather than maximizing only her final worth, the investor now also wishes to take into account her utility obtained from the continuous consumption. Then the solution must determine not only the optimal amount to put in each security, but also the optimal consumption in order to maximize utility.

Let the two part control be $w = w(t, x) = (u, c) \in U$, where $u= u(t,x)$ is the fraction of the wealth invested in the risky asset, and $c = c(t,x)$ is the consumption, both at the instant $t$. As proposed in Assumption 3, let the investor's income be $y(t)$. Then, the differential wealth equation can be written as
$$ dZ_t = [Z_t(r_0(1-u(t))+r_1u(t)) + y(t) - c(t)] dt + Z_t(s_1u(t))dB_t $$
The new performance function can be described as
$$J = \mathbb{E} \left[ \int_0^T e^{- \rho t} \frac{ c(t)^{\gamma}}{\gamma} dt + B(T, X_T) \right] $$
where $\rho$ is the discount rate that determines the present value of future cash flows, and $B$ is the bequest function at exit time $T$.

Denote $\bar{c}(t) = c(t) - y(t)$. Then the problem can be reformulated as:
\begin{equation}dZ_t = dZ_t^{(u)} = [Z_t(r_0(1-u(t))+r_1u(t)) + \bar{c}(t)] dt + Z_t(s_1u(t))dB_t \end{equation}
\begin{equation} J = \mathbb{E} \left[ \int_0^T e^{- \rho t} \frac{(\bar{c}(t) + y(t))^{\gamma}}{\gamma} dt + B(T, X_T) \right] \end{equation}
under control $\bar{w} = (u, \bar{c}) \in \bar{U}$

Using the HJB equation we seek optimality with $\phi(s,x) = \underset{u,\bar{c}}{\sup} \{ J^{u,\bar{c}} (s,x) \}$:
$$ 0 =  \underset{u,\bar{c}}{\sup} \{  f^{u,\bar{c}} (t, x) + (L^{u,\bar{c}} \phi )(t,x) \}  $$
$$=  \sup_{u,\bar{c}} \left\{ e^{-\rho t} \frac{ (\bar{c} + y)^{\gamma} }{\gamma} + \frac{\partial \phi}{\partial t} + [ x(r_0(1-u)+r_1u) - \bar{c} ] \frac{\partial \phi}{\partial x} + \frac 12 s_1^2 u^2 x^2 \frac{\partial^2 \phi}{\partial x^2} \right\}$$
In order to attain the supremum condition, the two first order conditions with respect to $u$ and $\bar{c}$ are computed:
\begin{align}
\phi_{\bar{c}} = 0  \quad \Rightarrow&  \quad  \bar{c}=\bar{c}(t,x) = (\phi_x \cdot e^{\rho t} )^{1/(\gamma - 1)} - y(t)
\\ \phi_u = 0  \quad \Rightarrow&  \quad u = u(t,x) = -\frac{( - r_0+r_1)\phi_x}{s_1^2 x \phi_{xx}}
\end{align}
Substituting $u, \bar{c}$ back into the original HJB equation, 
\begin{align}
0 =& \: \phi_t + r_ 0 x \phi_x - \frac{(r_1-r_0)^2 \phi_x^2}{2s_1^2\phi_{xx}} - (\phi_x e^{\rho t})^{1/(\gamma-1)}\phi_x \\
&+ y \phi_x + \frac{e^{-\rho t}}{\gamma} (\phi_x e^{\rho t})^{\gamma /(\gamma-1)}
\end{align}
First, if $\phi = e^{-\rho t} \tilde{\phi}$, all multiples of $ e^{\rho t}$ could be canceled such that
$$0 = \tilde{\phi}_t - \rho \tilde{\phi} + r_ 0 x \tilde{\phi}_x - \frac{(r_1-r_0)^2 \phi_x'^2}{2s_1^2\tilde{\phi}_{xx}} - (\tilde{\phi}_x)^{\gamma /(\gamma-1)} + y \tilde{\phi}_x + \frac{ (\tilde{\phi}_x)^{\gamma /(\gamma-1)}}{\gamma}$$ 

Let $\tilde{\phi}$ have the solution form $\tilde{\phi} = \frac{a(t)}{\gamma}(x+N(t))^{\gamma}$. This allows the solution to clear out the extra $y(t)$ term by equating the powers of $(x+N(t))$. The $1/\gamma$ term is added so that the derivative of $\phi$ would look nice. Thus,
$$\frac{N'(t) + y(t) + r_0x}{x+N(t)}$$
must be a constant. However, since $N(t)$ is only dependent on $t$, the constant must be $r_0$, thus we must solve the simple differential equation
\begin{equation} y(t) + N'(t) = r_0N(t)\end{equation}
Evidently, if $y(t)= 0$, then $N(t) = 0$ would be a simple solution to the problem. Now, as $t_0 \rightarrow T$, since total expected future income becomes\ $0$, this is similar to the situation that there is no income, or $y(t) \rightarrow 0$ for $ t \in [t_0, T]$. Equivalently then, $N(t) \rightarrow 0$, thus $N(T) =0$. I claim that $N(t)$ is described as the investor's future income stream at time $t$ (which is added to her current income in the expression $x+N(t)$). Thus $N(t)$ can be described as,
\begin{equation} N(t) = \int_t^T y(s)e^{-r_0(s-t)} ds \end{equation}
To prove this, we first verify the boundary condition, when $t=T$, $N(T)$ is obviously $0$. Then, by using the theorem of  differentiating under the integral sign [7],
\begin{align*} N'(t) =& -y(t)e^{-r_0(t-t)} + \int_t^T r_0 y(s)e^{-r_0(s-t)} ds \\ =& -y(t) + r_0\int_t^T y(s)e^{-r_0(s-t)} ds \end{align*}
Thus, the presented expression for $N(t)$ is indeed the solution to the differential equation.  

Plugging $\phi =   e^{-\rho t}\frac{a(t)}{\gamma}(x+N(t))^{\gamma}$  back into $(5.4)$, the HJB-equation may be simplified to
$$0 = a'(t) - a(t) \rho +  a(t)r_0 \gamma - a(t)\frac{ (r_1-r_0)^2 \gamma}{2 s_1^2 (\gamma -1)}  + a(t)^ {\gamma /(\gamma-1)} (1-\gamma) $$
Denote $\mu = -\rho +  r_0 \gamma +  \frac{(r_1-r_0)^2  \gamma}{2 s_1^2 (1 - \gamma)}$. This ODE is of the form of the Bernoulli equation, which has exact solutions:
For $\mu \neq 0$:
$$a(t) = \left[ \frac{\gamma -1 + \exp \left( \frac{\mu \cdot (C_2 - t)} {1-\gamma} \right)}{\mu} \right]^{1 -\gamma}$$
and for $\mu = 0$:
$$a(t) = \left[ \frac{C_2 - t + t\gamma}{1-\gamma} \right]^{1-\gamma}$$
where $C_2$ a constant depending on the bequest function.
Now, the entire problem may be solved for in $\phi$. Consequentially, since $c(t) = \bar{c}(t) + y(t)$ the optimal controls are 
\begin{equation} c^*(t, x) = a(t)^{1/(\gamma -1)} \cdot \left(x + N(t) \right)  \end{equation} 
\begin{equation} u^*(t,x) = \frac{(r_1-r_0)(x + N(t))} {x s_1^2 (1 - \gamma )} \end{equation}
We need to check the conditions of the problem to verify that the solutions are well defined, specifically that 

1. $\phi$ is monotonically increasing and concave with respect to $x$

2. The optimal consumption function $c$ is greater than $0$.

\noindent Since $\gamma \in (0,1)$, conditions 1 and 2 are both satisfied if and only if we have $a(t) \geq 0$. Thus, we simply need for $\mu \neq 0$:
$$\left( \gamma -1 + \exp \left( \frac{\mu \cdot (C_2 - t)} {1- \gamma} \right) \right) \mu^{-1} \geq 0$$
and for $\mu = 0$
$$C_2 \geq t(1-\gamma)$$
for all values of $t \in [0,T]$.

For the sake of a complete solution, let the investor have a constant income per unit time $y(t) =y$, and that $\mu \neq 0$. Furthermore, take $B[T, Z_T] = 0$ - that is, at time $T$, the investor will not gain any further utility from any excess assets. 
Using $(5.6)$ and taking $y(t) = y$ to be a constant, $N(t)$ can be solved as
\begin{equation}N(t) = \frac y{r_0}(1 - e^{r_0(t-T)}). \end{equation}
On the other hand, the boundary conditions demand that $a(T) = 0$, or equivalently, 
$$ \exp \left( \frac{\mu \cdot (C_2 - T)} {1- \gamma} \right) = 1 - \gamma$$
Solving for $C_2$, we get
$$C_2 =(1 - \gamma )/ \mu \cdot \log(1-\gamma)  +T$$
Plugging in the expression for $a(t)$ and simplifying, we get
\begin{equation}a(t) = \left[ \frac{(\gamma-1)( 1 - \exp ( \frac { \mu (T - t)}{1-\gamma}))}{\mu} \right]^{1-\gamma} \end{equation}
which is well defined for all $\mu \neq 0$.

\noindent \emph{Demonstration:}
Since $\gamma < 1$, for $\mu>0$, well definedness holds if \begin{equation}1-\exp(\mu(T-t)/(1-\gamma))\end{equation} is negative. This is true because $\mu > 0$ and $t<T$, thus making the term $e^{\mu(T-t)/(1-\gamma)}$ greater than 1. As for $\mu<0$, well definedness holds if the term $(5.11)$ is positive. However, now since $\mu > 0$, $e^{\mu(T-t)/(1-\gamma)}$ is less than 1. This implies that for all $\mu$ and $t<T$, the expression for $a(t)$ is well defined for this problem. 
\begin{rem}
In the case of $\mu =0$, $C_2 = T(1-\gamma)$. Then, $a(t) = (T-t)^{1-\gamma} \geq 0$, which in itself is a surprisingly simple expression for the time effect. 
\end{rem}

Thus, the full complete solution for the problem with bequest function $B = 0$ would be 
\begin{equation} c^*(t, Z_t) = \frac{ \mu (Z_t r_0 + y(1 - e^{r_0(t-T)}) }{r_0(1 - \gamma)\left( \exp \left( \frac { \mu (T - t)}{1-\gamma}\right) -1 \right)} \end{equation}
\begin{equation} u^*(t, Z_t) =  \frac{(r_1-r_0)(X_t r_0 + y(1 - e^{r_0(t-T)})} {Z_t s_1^2 (1 - \gamma ) r_0} \end{equation}

\bigskip

An important observation from the complete solution $(5.12)$ and $(5.13)$ is that when $y=0$, the optimal proportion remains the same as the fixed proportion $\bar u$ derived from previous problem for which only the final wealth matters. With the addition of income, the portfolio selection is no longer independent of wealth and time. This is expected, since with a higher income stream or a greater current income, the investor is able to accept a higher risk portfolio in order to attain a larger yield. Mathematically, this can be represented as $\partial u / \partial y > 0$ and $\partial u / \partial x > 0$. On the other hand, if instead $y$ is negative, for which we can discuss as a fixed essential consumption, the portfolio holding proportion would shift more towards the safe asset.

It makes sense to denote the \emph{effective future wealth} of the individual to be the present value of the income flow subjected to the risk free interest rate, or $Y(t)$, where
\begin{equation} Y(t) =  N(t) = \int_t^T y(t)e^{-r_0(s-t)} dt \end{equation} 
The incentive to alter from the proportion $\bar u$  decreases as $t \rightarrow T$ since the effective future wealth tends to $0$. 

When computing the optimal consumption, the investor views her \emph{effective total wealth} as the sum of her current wealth and the effective future wealth. Then, this set of solutions corresponds with the properties of Friedman's Permanent Income Hypothesis [8]. If the effective total wealth was fixed, 
$$\frac{c'^*(t)}{c^*(t)}=\frac{\partial c^*(t)}{\partial t} \frac{1}{c^*(t)} = \frac{\mu}{(1-\gamma)(\exp \left( \frac { \mu (T - t)}{1-\gamma}\right)-1)}$$
denotes the instantaneous growth rate of consumption with respect to time. Evidently, there will be more consumption as $t$ goes on since $c'^*(t) /c^*(t) > 0$. However, more interestingly, we can derive an economic significance behind $\mu$. If $\mu > 0$, then $c'^*(t) /c^*(t)$ generates a graph of exponential growth form. This means that the majority of consumption growth occurs much more rapidly as time goes on. On the other hand, when $\mu < 0$, $c'^*(t) /c^*(t)$ is of the form of a reverse exponential, and the consumption growth occurs much earlier. Then, $\mu$ acts as a measurement of propensity to consume in the future, for which the benefits of high reward investment options are pitted against the the drawbacks of high discount factor.

\section{Consumption with transaction costs}

This section presents an exposition of Magill and Constantinides' paper [3] with slight modifications and simplifications. 

The previous problems show security trading in continuous time. Since there is exactly one optimal proportion for every possible state in the solvency region, the portfolios need infinitesimally small continuous adjustments in order to reach an optimal control dictated by the HJB equation. Then, the costlessness of the transaction becomes an improbable assumption, since even though trading opportunities are available continuously in time, the investor cannot execute trade continuously. Therefore, in this problem, we introduce a transaction costs such that trading does not occur continuously. It is expected that this addition will result in the investor to use her available trading opportunities at random but optimal intervals, which is a much more realistic scenario. 

The investor starts with the wealth $Z_0 = (z_0(0), z_1(0))$ where $z_0(0)$ is her holding in cash and $z_1(0)$ is the wealth in stocks. Then her two assets satisfies the stochastic differential equation
\begin{align} dz_0 &= [r_0z_0 + y(t) -  c(t) - (1 + \bigchi_0 + \bigchi_v)v(t) ]dt \\
dz_1 &= [r_1z_1 +v]dt + [s_1z_1]dB_t \end{align}
where $v(t) = z_1'(t) \cdot p_1(t) = - z_0'(t) \cdot p_0(t)$ is transaction amount at time $t$ that the investor decides upon.Assume that performance function does not change from that of the previous problem:
\begin{equation}J^{v,c} = \mathbb{E} \left[ \int_0^T e^{- \rho t} \frac{ c(t)^{\gamma}}{\gamma} dt \right]\end{equation}
Let the control be $w(t,x) = w = (c,v)$. In order to solve the problem using stochastic control problem, let $dz_1$ be modified to
\begin{equation}  dz_1= \lim_{\epsilon \rightarrow 0} [r_1z_1 +v]dt + [s_1(z_1 + \epsilon v)]dB_t \end{equation}
\begin{rem}
This limiting procedure introduced by Magill and Constantinides [3] is the crux in solving this problem. Its beauty lies in that the solution controls can enter linearly into the HJB equation but do not affect the disturbance terms.
\end{rem}
Using the HJB equation  for $\phi(s,x) = \underset{v,c}{\sup} \{ J(s,x) \}$:
\begin{align} \phi(s,x) = \sup &  \{e^{-\rho t} c^\gamma/\gamma + \phi_{0}(r_0z_0 + Y -c - (\bigchi_0 + 1+ \bigchi_v)v) \\
&+ \phi_{1}(r_1z_1 +v)+ \phi_{1,1} \frac 12 s_1^2(z_1 + \epsilon v)^2 + \phi_t\} \end{align}
with appropriate terminal conditions \begin{equation}\phi(T,z_T) = 0 \end{equation} where $\phi_{i}  = \partial \phi / \partial z_i$ and $\phi_{i,j}  = \partial^2 \phi / \partial z_i z_j$.

Taking the first order condition with respect to $c$, $v$, we must have that 
\begin{align*} 0 = & e^{-\rho t} c^{*\gamma-1} - \phi_{0}\\
0 = & -\phi_{0}(\bigchi_0 + 1+ \bigchi_{v})+\phi_{1}+   \phi_{1,1} \epsilon s_1(z_1 + \epsilon v^*)  \end{align*}
which implies that
\begin{align} c^* &= (e^{\rho t}\phi_{0})^{1/(\gamma-1)}\\
v ^* &= (1/\epsilon)\{ (\phi_{0}(\bigchi_0+1+ \bigchi_{v})-\phi_{1})/ (\epsilon s_1^2 \phi_{1,1}) - z_1 \} \end{align}
Substituting $c^*$ and $v^*$ back into the HJB equation $(6.4)$

\begin{align*}0 =& \frac{e^{-\rho t}}{\gamma}(e^{\rho t}\phi_{0})^{\gamma/(\gamma-1)} +  \phi_{1}r_1z_1 +  \phi_{0}(r_0z_0 + Y) + \phi_t- \phi_0(e^{\rho t}\phi_{0})^{1/(\gamma-1)} \\
& +(\phi_{0}(\bigchi_0 + 1+ \bigchi_{v}) - \phi_1)\frac{z_1}{\epsilon} - \frac 1{2s_1^2\epsilon^2 \phi_{1,1}}(\phi_{0}(\bigchi_0 + 1+ \bigchi_{v}) - \phi_{1})^2 \end{align*}
Inspired by the last problem, the substitution 
$$\phi = \frac{e^{-\rho t}}{\gamma} a(t) (z_0 + bz_1 + N(t))^{\gamma}$$
is taken as a guess to solve the partial differential equation for $\phi$. Such that the term $(z_0 + bz_1 + N(t))$ is taken to the same power in the equation,
$$\frac{br_1z_1+r_0z_0 + Y + N'(t) + z_1(\bigchi_0 + 1+ \bigchi_{v}-b)/ \epsilon}{z_0 + bz_1 + N(t)}$$
must be a constant. Since that the coefficients of $z_0$ should be same on both sides after cross multiplying, this constant is $r_0$. Then, by equating the rest of the terms, 
\begin{equation} b = \frac{\bigchi_0 + 1 + \bigchi_{v}}{1+ \epsilon(r_0-r_1)} \qquad r_0N(t) = y(t) + N'(t) \end{equation}
The right hand equation simply describes $N(t)$ as the effective future wealth $(5.6)$ which we demonstrated in the last section. 
$$ N(t) = \int_t^T y(s)e^{-r_0(s-t)}ds = Y(t) $$
The HJB equation can be reduced to 
$$a'(t) + (1 - \gamma)a(t)^{\gamma/(1-\gamma)} + \mu a(t) = 0$$
where $$\mu = -\rho + r_0\gamma - \frac{\gamma(\bigchi_0 + 1 + \bigchi_v - b)^2}{2b^2\epsilon^2s_1^2(\gamma-1)}=\left[-\rho + r_0\gamma + \frac{\gamma (r_1-r_0)^2}{2(1 - \gamma)s_1^2} \right]$$
such that it satisfies the terminal condition $a(T) = 0$.
Solving the differential equation for $a(t)$, the solution does not change from that of $(5.10)$,
\begin{equation} a(t) = \left[\frac{(1 - \gamma)(\exp ( \frac{\mu(T-t) - 1}{1- \gamma})}{\mu} \right] ^{1-\gamma}\end{equation}
which is well defined for all $\mu \neq 0$ (see demonstration in last section).  

Then, a full set of solution under $z=(z_0,z_1)$ can be characterized as
{\footnotesize \begin{align} \phi(t,z) =& \frac{e^{-\rho t}}{\gamma} \left[\frac{(1-\gamma)(\exp ( \frac{\mu(T-t)}{1- \gamma}- 1)}{\mu} \right] ^{1-\gamma} \left( z_0 + \frac{\bigchi_0 + 1 + \bigchi_{v}}{1+ \epsilon(r_0-r_1)}z_1 + Y(t)\right)^{\gamma} \\
c^*(t,z) =& \mu \left[z_0 + \frac{\bigchi_0 + 1 + \bigchi_{v}}{1+ \epsilon(r_0-r_1)}z_1 + Y(t) \right]\left[(1-\gamma)(\exp ( \frac{\mu(T-t)}{1- \gamma})- 1)\right]^{-1}\\
v^*(t,z) =& \frac 1{\epsilon} \left\{ \frac{(r_0-r_1)} {(\gamma-1)s_1^2} \cdot \frac{1 + \epsilon(r_0-r_1)}{\bigchi_0 + 1 + \bigchi_{v}} \left(z_0 + \frac{\bigchi_0 + 1 + \bigchi_{v}}{1+ \epsilon(r_0-r_1)}z_1 + Y(t) \right)- z_1 \right\} \end{align} }

The expression for consumption is not different from that of the second problem $(5.12)$, so the analysis will be omitted. Instead, we will look in depth at the transaction function $v^*$ and make sense of it. The formulation for $v^*$ is surprising in that as $\epsilon \rightarrow 0$, $v^* \rightarrow \infty$. This can be explained by $v^*$ being acting as an instantaneous change in order to satisfy a specific criteria, which gives rise to the desired behavior that trading occurs at random instances rather than continuously.

To make a complete analysis, we divide $v^*$ by the effective total wealth of the individual $W(t)$
$$W(t) = z_0(t) + z_1(t) + Y(t)$$
assuming that $W(t) > 0$. Denote $\pi_i = z_i/W$, $\pi_y = Y/W$, thus, $\pi_0 + \pi_1 + \pi_y =1$. Let $\pi^0 = \frac{(r_1-r_0)} {(1 -\gamma)s_1^2}$. Note that $\pi^0 > 0$, and that $\pi^0$ is once again optimal control $\bar u$ derived in the first problem.
This allows us to isolate the action of $v^*$ to only the proportion of effective total wealth in the current risky asset $\pi_1$. 
$$\lim_{\epsilon \rightarrow 0} \: \epsilon \: \frac{\bigchi_0 + 1 + \bigchi_{v}}{1 + \epsilon(r_0-r_1)} \frac{v^*}W =  \nu(\pi, \bigchi)$$
\begin{align}
\ \nu(\pi, \bigchi) &= \pi^0(\pi_0 + (\bigchi_0 + 1 + \bigchi_{v})\pi_1 + \pi_y)-\pi_1(\bigchi_0 + 1 + \bigchi_{v}) \notag \\
&=\pi^0(\pi_0 + \pi_1 + \pi_y) + \pi^0 \bigchi_0 \pi_1 + \pi^0 \bigchi_v \pi_1 -\pi_1(\bigchi_0 + 1 + \bigchi_{v}) \notag\\
&=(\bigchi_v + \bigchi_0)\pi^0 \pi_1  + \pi^0- (\bigchi_0 + 1 + \bigchi_v )\pi_1
\end{align}
The function $\nu$ acts as a signal function in that as soon as $\nu \neq 0$, $v \rightarrow \pm \infty$, and securities will be traded instantaneously. 
If $\bigchi_v = \bigchi_0 = 0$, that is, the transaction cost is $0$, the investor exhibits no trade if and only if $\pi^0 \neq \pi_1$. This corresponds exactly with $(5.13)$ from the second problem, in which we can rearrange the optimal proportion equation to get
$$\frac{z_1(t)}{Z(t)} = u^*(t,z) =  \frac{(r_1-r_0)W(t)} {s_1^2 (1 - \gamma ) r_0 Z(t)} \implies \pi^0 = \pi_1 $$
The portfolio policy will force  $\pi_1 $ to be the same as $\pi_0$ at every instant, which highlights the massive amount of trading that takes place so that this strategy holds. 

With nonzero transaction costs, the analysis becomes more interesting. The equality $\nu = 0$ occurs when 
\begin{align} 0 =& (\bigchi_v + \bigchi_0) \pi^0  + \pi^0/ \pi_1 - (\bigchi_0 + 1 + \bigchi_v )\\
\pi_ 1 =& \frac{\pi^0 }{-(\bigchi_v + \bigchi_0) \pi^0 + \bigchi_0 + 1 + \bigchi_v}\\
=& \frac{\pi^0 }{(\bigchi_v + \bigchi_0) (1 - \pi^0) +1} \end{align}
Since $\bigchi_v$ assumes different values depending of the sign of $v$, then there must be at least 2 value for which no trade will occur. Let them be
\begin{align}\frac{\pi^0 }{(\bigchi + \bigchi_0) (1- \pi^0)+ 1} =& L \\
\frac{\pi^0 }{(-\bigchi + \bigchi_0) (1 - \pi^0 ) + 1} =& H \end{align}
Evidently, $L< H$, and $L, H$ divide the real line into three regions. Then, we need to consider three cases of the location of $\pi_1$:\\
\noindent \emph{Case 1:} $\pi_1 < L \iff v^* > 0$.

If $\pi_1$ (the proportion of risky asset) is too low, the optimal portfolio policy mandates that $v^*$ is chosen such that $\pi_1$ approaches $L$  immediately as $\epsilon \rightarrow 0$. Thus, $v^* > 0$ i.e. investments will be transferred into the risky asset from the bank. Once $\pi_1 = L$ hits, $\nu = 0$ and trading ceases. On the other hand, if $v^*>0$, then $\bigchi_v = \bigchi$ and $\nu > 0$. Then by $(6.11)$, 
$$\pi_1 < \frac{\pi^0 }{-(\bigchi + \bigchi_0) \pi^0 + \bigchi_0 + 1 + \bigchi} = L$$ \\
\noindent \emph{Case 2:} $\pi_1 > H \iff v^* < 0$.

The proof here is analogous to case 1. The investor will trade immediately with $v^*<0$ until $\pi_1 = H$ hits. \\
\noindent \emph{Case 3:} $L \geq \pi_1 \geq H \iff v^* = 0$ 

Case 3 follows immediately from Case 1 and Case 2. This means that the investor will not trade the the proportion $\pi_1$ in the region $[L, H]$. 

\begin{prop}: Suppose that the Assumptions 1-5 and the performance function are satisfied. Then the investor always confines her portfolio proportion to a fixed region $K = [L,H]$, where $L,H$ are described in $(6.12)$. Any deviation of the portfolio proportion to the region will be acted upon immediately with transaction $v$ so that the portfolio returns to the nearest boundary of $K$.
\end{prop}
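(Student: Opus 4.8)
\emph{Proof proposal.} The plan is to derive the Proposition as a direct consequence of the verification half of Theorem 2.1 together with the case analysis carried out immediately above. First I would invoke Theorem 2.1: the function $\phi$ obtained by solving the HJB equation, paired with the feedback controls $c^{*}$ and $v^{*}$ of $(6.13)$--$(6.14)$, satisfies $f^{v^{*},c^{*}}+\mathcal{L}^{v^{*},c^{*}}\phi=0$, so $(c^{*},v^{*})$ is an optimal policy and the trading behaviour dictated by $v^{*}$ is exactly what the investor does. The key object is the signal $\nu(\pi,\bigchi)$ of $(6.14)$: rearranging the scaled limit one gets $v^{*}\sim \frac{W}{\epsilon}\,\frac{1+\epsilon(r_{0}-r_{1})}{\bigchi_{0}+1+\bigchi_{v}}\,\nu(\pi,\bigchi)$ as $\epsilon\to 0$, so that (since $W>0$ and $\bigchi_{0}+1+\bigchi_{v}>0$) the sign of $v^{*}$ agrees with the sign of $\nu$, while the prefactor $1/\epsilon$ forces $|v^{*}|\to\infty$ precisely when $\nu\neq 0$. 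This is the mechanism that turns a would-be finite-rate trade into an instantaneous jump, and it is what I would emphasize at the start of the argument.

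Next I would pin down the no-trade region. Solving $\nu=0$ for $\pi_{1}$ gives $(6.11)$; since $\bigchi_{v}=\bigchi$ when $v>0$ and $\bigchi_{v}=-\bigchi$ when $v<0$, this produces the two thresholds $L$ and $H$ of $(6.12)$, with $L\le H$ (strictly when $\bigchi>0$, using also $\pi^{0}<1$ from Section 4). These two numbers split the line into three regimes, and I would then formalize Cases 1--3. If $\pi_{1}<L$, any trade must have $v^{*}>0$, hence $\bigchi_{v}=\bigchi$ and $\nu>0$; as $\epsilon\to0$ the only state at which $\nu$ vanishes for this sign of $\bigchi_{v}$ is $\pi_{1}=L$, so the jump carries $\pi_{1}$ up to $L$. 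Symmetrically, if $\pi_{1}>H$ then $v^{*}<0$ and $\pi_{1}$ is driven down to $H$. If $L\le\pi_{1}\le H$, then for whichever sign of $\bigchi_{v}$ is relevant neither of the two implication chains can hold, so $v^{*}=0$ and no trade occurs. Assembling the three cases yields: a trade occurs if and only if $\pi_{1}\notin K=[L,H]$, and when it occurs it moves $\pi_{1}$ to the endpoint of $K$ nearest its current value --- which is $L$ from below and $H$ from above. That is the assertion of the Proposition.

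The step I expect to be the main obstacle is making the word ``immediately'' precise. The control $v^{*}$ produced by the Magill--Constantinides limiting device is singular (impulse-like), and a fully careful argument would have to show that in the $\epsilon\to0$ limit the cumulative transaction $\int v^{*}\,dt$ over an arbitrarily short interval is finite and equal to exactly the amount needed to bring $\pi_{1}$ back to $\partial K$ --- i.e.\ that $\pi_{1}$ behaves as a diffusion reflected at the faces of $K$. In keeping with the heuristic level of the rest of the paper, I would argue this informally: because $|v^{*}|\to\infty$ off $K$ and $v^{*}=0$ on $K$, the proportion $\pi_{1}$ cannot spend positive time outside $K$, and the reflection keeps it on the nearest face; a rigorous version belongs to the theory of reflected diffusions / Skorokhod problems, for which I would refer the reader to Magill and Constantinides [3]. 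The remaining ingredients --- the sign bookkeeping for $\nu$ and the algebra isolating $L$ and $H$ --- are the routine computations already displayed preceding the statement.
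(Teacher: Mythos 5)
Your proposal follows essentially the same route as the paper: the paper's entire justification for Proposition 6.2 is the signal-function computation for $\nu$ and the three-case sign analysis immediately preceding the statement, which you reproduce (including the identification of $L$ and $H$ from $\nu=0$ under the two values of $\bigchi_v$). Your additional remarks --- invoking the verification half of Theorem 2.1, noting that $L<H$ requires $\bigchi>0$ and $\pi^0<1$, and flagging that ``immediately'' really means a reflected-diffusion/Skorokhod argument --- are slightly more careful than the paper's informal treatment but do not change the argument.
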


An excellent visual representation of Proposition 6.2 can be found in Davis and Norman's paper [9] which builds upon the work in Magill.

The portfolio proportion $\pi_1 = z_1/W$ acts as a stochastic diffusion process on region $K$ for which every time it exits the boundary of $K$, it is bought back via the optimal transaction control. This corresponds to the investor trading at random finite instances of $t \in [0,T]$, which is a reasonable trading action that we set out to model. The economic implication is that the investor needs to balance her improved diversification with costs of transaction.  When the proportion is inside the region $K$, the investor does not find it worthwhile to alter proportion so that it is closer to the optimal diversification proportion $\pi^0=\bar u$ since the benefits of diversification is underwhelmed by the transaction costs. Whereas, if the proportions exit the region $K$, the improved diversification benefits exceeds transaction costs, and it would be meaningful for the investor to trade so that $\pi_i$ goes to the boundary of $K$.

Suppose the actual trading costs from cash to stock is $\lambda$, and stock to cash is $\delta$, then $(\lambda + \delta)/2 = \bigchi$, and $(\lambda - \delta)/2 = \bigchi_0$. This indicates that $\bigchi$ is the average costs of trading, and $\bigchi_0$ is the premium of stocks over cash. If the average cost of trading $\bigchi$ goes up, the region $K$ grows as $L$ becomes smaller, and $H$ larger. A larger region indicates that the frequency of trading decreases since it is harder for the proportion processes to exit the boundary. The only case for which $L = H$, such that there is only exactly one point of optimal proportion, is when $\bigchi = \bigchi_v = 0$, i.e. the average trading costs is $0$ and the investor does not lose any money from trading back and forth the same amount at a certain instant. The primary effect of increasing the value of $\bigchi_0$ is to shift the value of both $L$ and $H$ down. This is expected since if the trading cost to get the risky asset is high compared to that of getting cash, it should be more favorable and worthwhile for the investor to hold on to the safe asset, which to her is worth dollar to dollar the same amount.

\section{Transaction Costs and a Mandatory Bequest}

In this section,we set out to solve the problem with the mandatory bequest function. First, I present a small story that serves as the backdrop as well as the motivation for this problem originally: 
\vspace{3 mm}
 
\emph{``An ancient village has 2 assets that the mayor must decide to invest in: pigs, which grows consistently based on how many pigs are owned, and gold, which has value that fluctuates depending on the market. It is currently the fall. At any time, the mayor can trade with neighboring villages for gold and pigs, however, she must pay a small transaction cost. The mayor can also direct the village to consume some of the pigs, which will give her some popularity within the town. When winter hits, the mayor must have a fixed amount of pigs for the village to store and eat in order to survive. Any extra assets left will have a minimal effect but positive on the popularity of the mayor. How can the mayor maximize her own popularity? "}
\vspace{3mm}

With this formulation of the problem, the bequest function can be written as: 
$$B[T, Z_T] =A' \cdot 1_{Z_T > K} $$
where $1$ is the indicator function, and $A'$ a suitable constant. 
Define the effective total wealth $W(t,z) = Z_t + N(t)$. Ideally, we can transform the bequest function to the form 
\begin{equation} W(T)^\gamma G(T)e^{-\rho T}. \end{equation}
Then the solution to the stochastic control problem should be readily solvable given the initial conditions and the appropriate $y(t)$ definition. Note that we assume that optimal $c(t) << K$, then by decreasing marginal utility, the extra utility derived from holding onto a wealth a significantly percentage larger than $K$ is effectively $0$. 

Let \begin{equation}N(t) = - K \cdot 1_{t = T} \qquad \text{ for } t \in [0, T] \end{equation}
This mandates that the investor have $K$ amounts of total wealth at time $t=T$, but have $0$ wealth for all $t < T$. Then by setting appropriate constants to $G(T)$, the $N(t)$ choice allows $(7.1)$ to be an alternative representation of the desired bequest function. 

In the original formulation for $N(t)$ in $(5.5)$, it is described as the solution to 
$$r_0 N(t) =  N'(t) + y(t)$$
Thus, we need to find an appropriate $y(t)$ (both mathematically and economically coherent with the model) in order to obtain the $N(t)$ within the final solution. The key idea noting that $N(t)$ can be approximated  by a curve similar to the Gaussian distribution pdf with mean $T$ and variance $\sigma^2$:
\begin{equation} N(t) = - \lim_{\sigma \rightarrow 0} K \cdot e^{-\frac{(t - T)^2}{2 \sigma^2}} \end{equation}
Then,
$$N'(t) =- \lim_{\sigma \rightarrow 0}  K \cdot \frac{(T-t)}{\sigma^2} \cdot e^{-\frac{(t - T)^2}{2\sigma^2}}$$
such that 
$$y(t) = - \lim_{\sigma \rightarrow 0} K \cdot e^{-\frac{(t - T)^2}{2 \sigma^2}} ( r_0 - \frac {T-t}{\sigma^2})$$
Since as $\sigma \rightarrow 0$, $r_0 << (T-t)/ \sigma^2$, we can ignore $r_0$
\begin{equation} y(t) = \lim_{\sigma \rightarrow 0} K \cdot \frac{(T-t)}{\sigma^2} \cdot e^{-\frac{(t - T)^2}{2 \sigma^2}} = - N'(t).\end{equation}
Note that
$$N(T) - N(a) = \int\limits_a^T N'(t) dt = - \int\limits_a^T y(t) dt$$
As $\sigma \rightarrow 0$, there exists $\epsilon(\delta) \rightarrow 0$ such that $|N(T-\epsilon) - N(0)| < \delta$, where $\delta$ is any small positive constant. Since $N(T) - N(0) = -K$, we can establish a bound
$$-(K - \delta)>N(T) - N(T-\epsilon)>- K$$
This indicates that if there is a jolt of positive income in a short period of time such that $$\int\limits_{T-\epsilon}^T y(t) \approx K,$$then the problem can be solved for the mandatory bequest function of saving wealth with value $K$. 

If instead $-K$ were to be the value of what we wish to attain, the problem statement would have the interpretation that the investor is rapidly taking out $K$ amounts of cash near the end of $T$. Any excess cash would be funneled into the bequest  $B[T,Z_T]$,which gives a minimal but positive value to the remaining cash. This gives us a solvable alternative construction of the original problem in that the mandatory bequest is handled by the fixed income function, where the income function tends to a jolt at time $T$. 

We will now formulate the updated equivalent problem explicitly.
\begin{align} dz_0 =& [r_0z_0 + y(t) - c(t) - (1 + \bigchi_v)v(t) ]dt \\
dz_1=& [r_1z_1 +v]dt + [s_1z_1]dB_t \end{align}
Subjected to the performance function
\begin{equation} J^{v,c} =  \left[ \int_0^T e^{- \rho t} \frac{ c(t)^{\gamma}}{\gamma} dt + e^{-\rho T}A'( x -  K)^{\gamma} \right] \end{equation}
and income function
\begin{equation} y(t) =  \lim_{\sigma \rightarrow 0}  -K \cdot \frac{(T-t)}{\sigma^2} \cdot e^{r_0-\frac{(t - T)^2}{2 \sigma^2}}, \qquad N(T) = 0 \end{equation}
We need to solve for the two unknown functions $W(t)$ and $a(t)$ in order to determine the HJB equation solution. Using the coefficients of $(6.7)$, the estimate of $(7.4)$, and the terminal condition of $(7.7)$, we may solve for $N(t)$ and derive 
\begin{align} W(t) = x + N(t) =& \lim_{\sigma \rightarrow 0} \: z_0 + \frac{\bigchi_0 + 1 + \bigchi_{v}}{1+ \epsilon(r_0-r_1)}z_1 + (K \cdot e^{-\frac{(T-t)^2}{2 \sigma^2}} - K) \\
 =& z_0 + \frac{\bigchi_0 + 1 + \bigchi_{v}}{1+ \epsilon(r_0-r_1)}z_1 - K \cdot (1 - 1_{t = T}) \end{align}
Since the Bernoulli equation remains unchanged from that of section 5, $a(t)$ does not change: 
$$a(t) = \left[ \frac{\gamma -1 + \exp \left( \frac{\mu \cdot (C_2 - t)} {1-\gamma} \right)}{\mu} \right]^{1 -\gamma}$$
However, this time the bequest function mandates that $a(T) = A'$. Solving for $C_2$ and then $a(t)$
$$C_2 =(1 - \gamma )/ \mu \cdot \log(\mu A'^{1/(1-\gamma)} + 1-\gamma)  +T$$
\begin{equation} a(t) = \left[ \frac{\gamma - 1 + (\mu A^{1/(1-\gamma)} + 1 - \gamma)\exp ( \frac { \mu (T - t)}{1-\gamma})}{\mu} \right]^{1-\gamma}.\end{equation}
Optimal controls $u$ and $c$ then are respectively:
\begin{align} c^*(t,z) =& \mu W(t) \left[\gamma - 1 + (\mu A^{1/(1-\gamma)} + 1 - \gamma)\exp ( \frac { \mu (T - t)}{1-\gamma})\right]^{-1}\\
v^*(t,z) =& \frac 1{\epsilon} \left\{ \frac{(r_0-r_1)} {(\gamma-1)s_1^2} \cdot \frac{1 + \epsilon(r_0-r_1)}{\bigchi_0 + 1 + \bigchi_{v}} W(t)- z_1 \right\} \end{align}

Since the problem follows closely the permanent income hypothesis [8], it is expected that the investor will consume less when she has a mandatory bequest function. Indeed, while taking into account of her effective total wealth, the investor will subtract from her current wealth the amount that she needs to pay at $T$. The constant $A'$ measures how important the excess cash is, and  $\partial c^* / \partial A'$ depends on the sign of $\mu$, which describes whether it is better to consume now or in the future. An curious case is when $A'$ drops below $0$ - that is, the investor is penalized if she keeps a certain amount that exceeds the mandatory payment. When $\mu < 0$, the consumption may even lower because the discount factor will outweigh the investment opportunities. Of course, we still must beware of the initial well-definedness condition that $a(t)>0$. 

The transaction control $v^*$ on the other hand does not take into account the bequest function at all. The investor will choose a transaction policy that will maximize the expected utility when subjected to her effective total wealth. It is important to note that when the mandatory bequest value is increased, $W(t)$ increases and $\pi_1$ decreases while the benchmark $\pi^0$ is unchanged. From Proposition 6.2, the optimal proportions for the risky asset drops, which is accompanied by a corresponding rise in the optimal proportions for the safe asset. Intuitively, the investor is taking on a safer path so that her satisfying of the mandatory bequest is more assured. A detailed analysis regarding the transaction behaviors due to the transaction costs has been provided in the previous section.

\section{Conclusion}

This paper uses the powerful HJB equation to systematically construct and analyze optimal continuous time dynamic portfolio models via many concrete examples. The basic methods presented here are applicable to a wide range of economic models that use underlying decision theory. The paper discusses the qualitative effects of an optimal consumption portfolio problem subjected to transaction costs and a mandatory bequest goal. The most basic change is that the investor removes the mandatory bequest from her total wealth when making decisions and confines the optimal portfolio proportions to a fixed region. A direct consequence is that the investor will consume less and will adjust her portfolio proportions at random instances tending towards the safer asset, which reflects her level of caution in order to satisfy the bequest goal. The magnitude of these adjustments correspond directly with the size of the mandatory bequest.

The models presented in this paper should be easily extended to the general $k$-asset case, so that the theory could be applied to more diverse and intercorrelated assets. Furthermore, a more complex utility function could be implemented such that the model may be more dynamic. Even with these improvements to the model, it is expected that the properties observed in this paper will continue to hold. A wider significance of this paper is to scratch the surface and elicit the studying of more elaborate and vibrant bequest functions so that they may serve a variety of economic purposes.

\end{document}